\numberwithin{equation}{section} 
\newcommand{\bea}{\begin{eqnarray}}
\newcommand{\eea}{\end{eqnarray}}
\newcommand{\ba}{\begin{array}}
\newcommand{\ea}{\end{array}}
\newcommand{\edc}{\end{document}}
\newcommand{\bc}{\begin{center}}
\newcommand{\ec}{\end{center}}
\newcommand{\be}{\begin{equation}}
\newcommand{\ee}{\end{equation}}
\def\ca{{\mathcal A}}
\def\cn{{\mathcal N}}
\def\bc{{\mathbb C}}
\def\bn{{\mathbb N}}
\def\br{{\mathbb R}}
\def\b{\beta}
\def\G{\Gamma}
\def\m{\mu}
\def\s{\sigma}
\def\w{\omega}
\def\Om{\Omega}
\def\h{{\mathbf{h}}}
\newtheorem{thm}{Theorem}[section]
\newtheorem{lem}[thm]{Lemma}
\newtheorem{cor}[thm]{Corollary}
\newtheorem{prop}[thm]{Proposition}
\theoremstyle{remark}
\newtheorem{rem}{Remark}[section]
\date{\today}
\begin{document}

\title[Gibbs measures]
{Gibbs measures and free energies of Ising-Vannimenus Model on the
Cayley tree}
\author{Farrukh Mukhamedov}
\address{Farrukh Mukhamedov\\
 Department of Mathematical Sciences\\
College of Science, The United Arab Emirates University\\
P.O. Box, 15551, Al Ain\\
Abu Dhabi, UAE} \email{{\tt far75m@gmail.com} {\tt
farrukh.m@uaeu.ac.ae}}

\author{Hasan Ak\i n}
\address{Hasan Ak\i n, Ceyhun Atuf Caddesi 1164, Sokak 9/4
\c{C}ankaya, TR06460, Ankara, Turkey} \email{{\tt
akinhasan25@gmail.com}}

\author{ Otabek Khakimov}
\address{ Otabek Khakimov\\
Institute of Mathematics, 29, Do'rmon Yo'li str., 100125, Tashkent,
Uzbekistan.} \email {{\tt hakimovo@mail.ru}}

\begin{abstract}
In this paper, we consider the Ising-Vannimenus model on a Cayley
tree for order two with competing nearest-neighbor and prolonged
next-nearest neighbor interactions. We stress that the mentioned
model was investigated only numerically, without rigorous
(mathematical) proofs. One of the main points of this paper is to
propose a measure-theoretical approach for the considered model. We
find certain conditions for the existence of Gibbs measures
corresponding to the model, which allowed to establish the existence
of the phase transition. Moreover, the free energies and entropies,
associated with translation invariant Gibbs measures, are
calculated. \vskip 0.3cm \noindent {\it
Mathematics Subject Classification}: 82B26, 46S10, 12J12, 39A70, 47H10, 60K35.\\
{\it Key words}: Ising-Vannimenus model; Gibbs measure, phase
transition, free energy, entropy.
\end{abstract}

\maketitle

\tableofcontents
\section{Introduction}

It is known \cite{G} that the Gibbs measures are one of the
central objects of equilibrium statistical mechanics. Also, one of
the main problems of statistical physics is to describe all Gibbs
measures corresponding to the given Hamiltonian \cite{Bax}. It is
well-known that such measures form a nonempty convex compact
subset in the set of all probability measures. A simplest model in
statistical mechanics is the Ising model which has wide
theoretical interest and practical applications. There are several
papers (see \cite{NHSS1,Roz}) which are devoted to the description
of this set for the Ising model on a Cayley tree. However, a
complete result about all Gibbs measures even for the Ising model
is lacking. Later on in \cite{V} such an Ising model was
considered with next-neatest neighbor interactions on the Cayley
tree for which its phase diagram was described. The considered
model, in what follows, we will call as \textit{Ising-Vannimenus
model}. It turns out that this model has a rich enough structure
to illustrate almost every conceivable nuances of statistical
mechanics. Furthermore, intensive investigations were devoted to
generalizations of the Ising-Vannimenus model (see
\cite{Akin2017,PB,bak,FS,IT,JB,MTA,OMM,RAU} for example), but most
results of the existing works are numerically obtained. Therefore,
in \cite{NHSS} it has been proposed a measure-theoretical approach
(see \cite{G}) to the mentioned model on the Cayley tree. In fact,
the proposed approach, the authors used a dimer analogue of the
model. Therefore, that paper cannot be considered as rigorous
approach to the Ising-Vannimenus model. Hence, one of the main
aims of the present paper is to develop a measure-theoretic
approach (i.e. Gibbs measure formalism) to rigorously establish
the phase transition for the Ising-Vannimenus model on the Cayley
tree. In \cite{A1} some attempts have been made to study the phase
transition for the mentioned model within theoretical approach,
but a different definition was used for the local Gibbs measures
over the Cayley tree of order two.

On the other hand, recently, Gandolfo et al. \cite{GRRR} have
obtained some explicit formulae of the free energies (and
entropies) according to boundary conditions (b.c.) for the Ising
model on the Cayley tree. In \cite{GRS}, a wide class of new
extreme Gibbs states for the Ising model was constructed. All
these works are based on the measure-theoretic approach to study
Gibbs measures for the Ising model on the Caylay trees. Therefore,
our second aim is to present, as an illustration of the obtained
method, to find free energies of the Ising-Vannimenus model within
the developed technique.

Until now, many researchers have investigated Gibbs measures
corresponding to models with nearest-neighbor interactions on Cayley
trees. The aim of this paper is to propose rigorously the
investigation of Gibbs measures for the Ising-Vanninenus model
\cite{V} with ternary prolonged and nearest neighbor interactions on
Cayley tree.

The paper is organized as follows. In section \ref{Preliminaries},
we provide necessary notations and define the Ising-Vannimenus
model on the Caylay tree. In section \ref{Gibbs measures}, using a
rigorous measure-theoretical approach, we find certain conditions
for the existence of Gibbs measures corresponding to the model on
the Cayley tree of an arbitrary order. To describe the Gibbs
measure, we obtain a system of functional equations (which is
extremely difficult to solve). Nevertheless, in section
\ref{e-Gibbs measures}, we are able to succeed in obtaining
explicit solutions by making reasonable assumptions, for the
existence of translational invariant Gibbs measures. Furthermore,
in section \ref{phase transition}, we establish the existence of
the phase transition. We note that the periodic solutions of the
obtained functional equations will correspond to the periodic
phases of the model. We will show that to find general solutions
even in the case of translation-invariant solutions of the system
is not an easy job. The case $J_p<0$ is not easy and will require
a lot of effort to explicitly find periodic solutions. Note that
even for the usual Ising model (anti ferromagnetic case) up to now
not all periodic solutions have been found(see \cite{Roz} for the
review). In section \ref{Free energy}, as an illustration of the
developed technique, one finds the free energy associated with
translation-invariant Gibbs measures, which allows to calculate
the corresponding entropy. Finally, section \ref{Conclusions}
contains concluding remarks and discussion of the consequences of
the results.

\section{Preliminaries}\label{Preliminaries}

\subsection{Ising-Vannimenus Model with competing interactions on Cayley tree}

A Cayley tree $\Gamma^k$ of order $k\geq 1 $ is an infinite tree,
i.e., a graph without cycles with exactly $ k+1 $ edges issuing from
each vertex. Let $\Gamma^k=(V, \Lambda)$, where $V$ is the set of
vertices of $ \Gamma^k$, $\Lambda$ is the set of edges. Two vertices
$x$ and $y$ ( $x,y \in V$) are called {\it nearest-neighbors} if
there exists an edge $l\in\Lambda$ connecting them, which is denoted
by $l=<x,y>$. The distance $d(x,y), x,y\in V$, on the Cayley tree
$\Gamma^k$, is the number of edges in the shortest path from $x$ to
$y$.

For a fixed $x^0\in V$, called the root, we set
$$
W_n=\{x\in V| d(x,x^0)=n\},  \\ V_n=\{x\in V| d(x,x^0)\leq n\}
$$
and $L_n$ denotes the set of edges in $V_n$. The fixed vertex $x^0$
is called the $0$-th level and the vertices in $W_n$ are called the
$n$-th level and for $x\in W_n$ let
$$
S(x)=\{y\in W_{n}| d(x,y)=1\},
$$
be the set of direct successors of $x\in W_{n-1}$. For the sake of
simplicity we put $|x| = d(x,x^0)$, $x\in V$. Two vertices $x,y\in
V$ are called {\it the next-nearest-neighbors} if $d(x,y)=2$.
Next-nearest-neighbor vertices $x$ and $y$ are called {\it prolonged
next-nearest-neighbors} if they belong to the same branch, i.e.
$|x|\neq |y|$, which is denoted by $\widetilde{>x,y<}$.

Let spin variables $\sigma(x), x\in V,$ take values $\pm 1$. The
\textit{Ising-Vannimenus model} with competing nearest-neighbors and
next-nearest-neighbors binary interactions is defined by the
following Hamiltonian
\begin{equation}\label{hm}
H(\sigma)=-J_p\sum_{\widetilde{>x,y<}}\sigma(x)\sigma(y) - J
\sum_{<x,y>}\sigma(x)\sigma(y),
\end{equation}
where the sum in the first term ranges all prolonged
next-nearest-neighbors and  the sum in the second   term ranges
all nearest-neighbors. Here $J_p,J\in {R}$ are coupling constants.

Note that in \cite{V} it is assumed that $J>0$ and $J_p<0.$ Below
we consider the model \eqref{hm} with arbitrary sign of the
coupling constants.

As usual, one can introduce the notions of Gibbs distribution of
this model, limiting Gibbs distribution, pure phase ( extreme Gibbs
distribution ), etc (see \cite{G}, \cite{P}).

\section{Gibbs measures of the Ising-Vannimenus model}\label{Gibbs measures}

In this section we define a notion of Gibbs measure corresponding to
the Ising-Vannimenus model on an arbitrary order Cayley tree. We
propose a new kind of construction of Gibbs measures corresponding
to the model.

Below, for the sake of simplicity, we will consider a semi-infinite
Cayley tree $\Gamma_+^k$ of order $k$, i.e. an infinite graph
without cycles with $k+1$ edges issuing from each vertex except for
$x^0$ which has only $k$ edges.

We consider the model where the spin takes values in the set
$\Phi=\{-1,+1\}$ ($\Phi$ is called a {\it state space}) and is
assigned to the vertices of the tree $\G^k_+=(V,\Lambda)$. A
configuration $\s$ on $V$ is defined as a function $x\in
V\to\s(x)\in\Phi$; in a similar manner one defines configurations
$\s_n$ and $\w$ on $V_n$ and $W_n$, respectively. The set of all
configurations on $V$ (resp. $V_n$, $W_n$) coincides with
$\Omega=\Phi^{V}$ (resp. $\Omega_{V_n}=\Phi^{V_n},\ \
\Omega_{W_n}=\Phi^{W_n}$). One can see that
$\Om_{V_n}=\Om_{V_{n-1}}\times\Om_{W_n}$. Using this, for given
configurations $\s_{n-1}\in\Om_{V_{n-1}}$ and $\w\in\Om_{W_{n}}$ we
define their concatenations  by
$$
(\s_{n-1}\vee\w)(x)= \left\{
\begin{array}{ll}
\s_{n-1}(x), \ \ \textrm{if} \ \  x\in V_{n-1},\\
\w(x), \ \ \ \ \ \ \textrm{if} \ \ x\in W_n.\\
\end{array}
\right.
$$
It is clear that $\s_{n-1}\vee\w\in \Om_{V_n}$.

In the Ising-Vannimenus model spin takes values in $\Phi=\{-1,+1\}$
and the relevant Hamiltonian has the form
\begin{equation}\label{ham}
H(\sigma)=
-J_p\sum\limits_{ \widetilde{>x,y<}}\sigma(x)\sigma(y) -J
\sum\limits_{<x,y>}\sigma(x)\sigma(y),
\end{equation}

Assume that  $\h: (V\setminus\{x^0\})\times
(V\setminus\{x^0\})\times \Phi\times\Phi\to \mathbb{R}$ is a
mapping, i.e.
$$\h_{xy,uv}=(h_{xy,++},h_{xy,+-},h_{xy,-+},h_{xy,--}),$$
where $h_{xy,uv}\in \mathbb{R}$, $u,v\in\Phi$, and $x,y\in
V\setminus\{x^{(0)}\}$.

Now, we define the Gibbs measure with memory of length 2 on the
Cayley tree as follows:
\begin{equation}\label{mu}
\m^{(n)}_{\h}(\s)=\frac{1}{Z_{n}}\exp[-\beta H_n(\s)+\sum_{x\in
W_{n-1}}\sum_{y\in
S(x)}\sigma(x)\sigma(y)\h_{xy,\sigma(x)\sigma(y)}].
\end{equation}

Here, $\beta=\frac{1}{kT}$, $\sigma\in\Omega_{V_n}$ and $Z_n$ is the
corresponding to partition function
\begin{equation}\label{Zn}
Z_{n}=\sum\limits_{\sigma_n\in \Omega_{V_n}}\exp[-\beta
H(\s_n)+\sum_{x\in W_{n-1}}\sum_{y\in
S(x)}\sigma(x)\sigma(y)\h_{xy,\sigma(x)\sigma(y)}].
\end{equation}

\begin{rem} We stress that the local Gibbs measures considered in
\cite{A1} were defined as follows:
\begin{equation*}
\tilde\m^{(n)}_{\h}(\s)=\frac{1}{Z_{n}}\exp[-\beta
H_n(\s)+\sum_{x\in W_{n-1}}\sum_{y,z\in
S(x)}\sigma(x)\sigma(y)\sigma(z)\h_{xyz,\sigma(x)\sigma(y)\sigma(z)}]
\end{equation*}
which are different from \eqref{mu}.
\end{rem}

In this paper, we are interested in a construction of an infinite
volume distribution with given finite-dimensional distributions.
More exactly, we would like to find a probability measure $\m$ on
$\Om$ with given conditional probabilities $\m_{\h}^{(n)}$, i.e.
\begin{equation}\label{CM}
\m(\s\in\Om: \s|_{V_n}=\s_n)=\m^{(n)}_{\h}(\s_n), \ \ \
\textrm{for all} \ \ \s_n\in\Om_{V_n}, \ n\in\bn.
\end{equation}
If the measures $\{\m^{(n)}_{\h}\}$ are \textit{compatible}, i.e.
\begin{equation}\label{comp}
\sum_{\w\in\Om_{W_n}}\m^{(n)}_{\h}(\s\vee\w)=\m^{(n-1)}_{\h}(\s), \
\ \ \textrm{for any} \ \ \s\in\Om_{V_{n-1}},
\end{equation}
then according to the Kolmogorov's theorem there exists a unique
measure $\m_{\h}$ defined on $\Om$ with a required condition
\eqref{CM}. Such a measure $\m_{\h}$ is said to be {Gibbs measure}
corresponding to the model. Note that a general theory of Gibbs
measures has been developed in \cite{G,Roz}.

In the sequel, we need the following auxiliary fact.
\begin{lem}\label{lemma1}
If $\frac{\mathbf{a}}{\mathbf{b}}=\frac{N_1}{N_2}$,
$\frac{\mathbf{a}}{\mathbf{c}}=\frac{N_1}{N_3}$ and $\frac{\mathbf{a}}{\mathbf{d}}=\frac{N_1}{N_4}$,
then there exists $D\in \mathbb{R}$ such that  $\mathbf{a}=D N_1$,  $\mathbf{b}=D N_2$,  $\mathbf{c}=D N_3$ and  $\mathbf{d}=D N_4$.
\end{lem}

The next statement describes the conditions on the boundary fields
$\h$ guaranteeing the compatibility of the distributions
$\{\m^{(n)}_\h\}$ .

\begin{thm}\label{theorem1}
The measures  $\m^{(n)}_\h$, $n=1,2,...,$ in \eqref{mu} are
compatible iff for any $x,y\in V$ the following equations hold:
\begin{equation}\label{necessary}
\left\{
\begin{array}{ll}
e^{h_{xy,++}+h_{xy,-+}}=\prod\limits_{z\in
S(y)}\frac{\exp[\h_{yz,++}](ab)^{2}+\exp[-\h_{yz,+-}]}
{\exp[\h_{yz,++}]a^2 +\exp[-\h_{yz,+-}] b^2}\\[2mm]
e^{h_{xy,--}+h_{xy,+-}}=\prod\limits_{z\in
S(y)}\frac{\exp[-\h_{yz,-+}]+\exp[\h_{yz,--}](ab)^2}{\exp[-\h_{yz,-+}]b^2+\exp[\h_{yz,--}]a^2}\\[2mm]
e^{h_{xy,++}+h_{xy,+-}}=\prod\limits_{z\in
S(y)}\frac{\exp[\h_{yz,++}](ab)^2+\exp[\h_{yz,+-}]]}{\exp[-\h_{yz,-+}]b^2+\exp[\h_{yz,--}]a^2},\\
\end{array}
\right.
\end{equation}
where $a=\exp(\beta J)$ and $b=\exp(\beta J_p)$.
\end{thm}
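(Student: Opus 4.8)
The plan is to verify the compatibility condition \eqref{comp} directly, by inserting the explicit form \eqref{mu} of $\m^{(n)}_\h$ and summing over the last level $W_n$. First I would split the Hamiltonian as $H(\s\vee\w)=H(\s)+\Delta_n(\s,\w)$, where $\Delta_n$ collects exactly the interactions that involve a vertex $z\in W_n$: the nearest-neighbour bonds $\langle y,z\rangle$ with $y\in W_{n-1}$, $z\in S(y)$, together with the prolonged next-nearest-neighbour pairs $\widetilde{>x,z<}$, where $x\in W_{n-2}$ is the (grand)parent of $z$. Substituting into \eqref{comp} and cancelling the common factor $e^{-\beta H(\s)}$ (which occurs on both sides, because the boundary field of $\m^{(n-1)}_\h$ lives on the $W_{n-2}$--$W_{n-1}$ edges, not on $W_n$), the problem reduces to evaluating the sum over $\w\in\Om_{W_n}$ of the remaining weight and comparing it, edge by edge, with the boundary factor of $\m^{(n-1)}_\h$ coming from \eqref{Zn}.

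The structural point I would exploit is that $\G^k_+$ has no loops, so the spins $\{\w(z)\}_{z\in W_n}$ decouple: the summand factorises over $z$, and the contribution of a single $z\in S(y)$ depends only on its own spin $\w(z)$, on its parent spin $\s(y)$, and on its grandparent spin $\s(x)$. Hence
\[
\sum_{\w\in\Om_{W_n}}(\cdots)=\prod_{y\in W_{n-1}}\ \prod_{z\in S(y)} K_{yz}\big(\s(y),\s(x)\big),\qquad K_{yz}(\xi,\eta)=\sum_{s=\pm1}a^{\xi s}\,b^{\eta s}\,e^{\xi s\,h_{yz,\xi s}},
\]
with $a=e^{\beta J}$, $b=e^{\beta J_p}$. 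A short computation of the four values $K_{yz}(\pm,\pm)$ produces precisely the numerators and denominators on the right-hand sides of \eqref{necessary}. This is where the memory-of-length-$2$ nature of \eqref{mu} enters: each local factor genuinely depends on the grandparent spin $\eta=\s(x)$, which is exactly what couples the field on the edge $\langle x,y\rangle$ to the fields on the edges $\langle y,z\rangle$.

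It then remains to convert the factorised identity into the per-edge equations \eqref{necessary}. After cancellation the identity reads $\prod_{\langle x,y\rangle}\big(\prod_{z\in S(y)}K_{yz}(\s(y),\s(x))\big)\,e^{-\s(x)\s(y)h_{xy,\s(x)\s(y)}}=Z_n/Z_{n-1}$ for every $\s\in\Om_{V_{n-1}}$. The content of \eqref{necessary} is that, for each boundary edge, the four quantities $e^{\s(x)\s(y)h_{xy,\s(x)\s(y)}}$ are proportional, through a single edge constant $D_{xy}$, to the four products $\prod_{z\in S(y)}K_{yz}(\s(y),\s(x))$. Here I would invoke Lemma~\ref{lemma1}: the system \eqref{necessary} is exactly the list of three independent ratios of these four proportions, in which $D_{xy}$ (and with it the unknown $Z_n/Z_{n-1}$) has been eliminated. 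Conversely, assuming \eqref{necessary}, Lemma~\ref{lemma1} reconstructs a common $D_{xy}$, rendering every edge factor constant, so that \eqref{comp} holds with $Z_n/Z_{n-1}=\prod_{\langle x,y\rangle}D_{xy}^{-1}$; this value is automatically consistent, since summing \eqref{comp} over all $\s$ uses only that both measures have total mass one.

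I expect the main obstacle to be precisely this last reduction: eliminating the global normalisation $Z_n/Z_{n-1}$ and passing from the single product identity to one equation per edge. The clean part is that each $y\in W_{n-1}$ sits in exactly one boundary edge, so varying $\s(y)$ forces each edge factor to be independent of $\s(y)$. The delicate part is that the grandparent spins $\s(x)$, $x\in W_{n-2}$, are shared by all sibling edges issuing below $x$, so a naive single-spin-flip in $\s(x)$ only yields the product of \eqref{necessary} over the successors of $x$ rather than the equation on each edge; recovering the clean per-edge form is exactly what makes Lemma~\ref{lemma1} (rather than a bare ratio computation) the appropriate tool, and is where I would concentrate the care of the argument.
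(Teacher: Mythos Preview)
Your proposal is essentially the paper's own proof: split $H_n(\s\vee\w)$ as in \eqref{ham1}, factorise the sum over $\w$ vertex by vertex to obtain the kernels $K_{yz}(\s(y),\s(x))$, take ratios to arrive at \eqref{Kolmogorov3}--\eqref{Kolmogorov4a}, and for the converse invoke Lemma~\ref{lemma1} to produce the constants $D(x,y)$, multiply them into $U_{n-1}$, and conclude $Z_n=U_{n-1}Z_{n-1}$. The paper carries out exactly these steps, so your plan matches it.

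Two small remarks. First, the paper uses Lemma~\ref{lemma1} only in the \emph{sufficiency} direction (reconstructing a common $D(x,y)$ from the three ratios); your suggestion that it is the key device for the \emph{necessity} direction, to pass from the product over siblings of $x$ to the individual per-edge equations, is not how the paper actually proceeds---the paper simply fixes $\langle x,y\rangle$ and varies $\s(x),\s(y)$ without commenting on the sibling-sharing issue you raise. Second, the subtlety you flag is real and the paper glosses over it; note, however, that the two ratios obtained by flipping $\s(y)$ alone (which \emph{are} cleanly per-edge) already determine the proportionality class of the four quantities $e^{\s(x)\s(y)h_{xy,\s(x)\s(y)}}$ up to a factor depending only on $\s(x)$, and it is this remaining freedom that the $\s(x)$-flip fixes, so concentrating the care there is the right instinct.
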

\begin{proof}
{\sc Necessity}. From \eqref{comp}, we have
\begin{eqnarray}\label{comp111}
&&L_n\sum\limits_{\eta\in \Omega_{W_{n}}}\exp[-\beta H_n(\s\vee
\eta)+
\sum\limits_{x\in W_{n-1}}\sum\limits_{y\in
S(x)}\sigma(x)\sigma(y)\h_{xy,\sigma(x)\sigma(y)}]\nonumber\\[2mm]
&=&\exp [-\beta H_n(\s)+\sum\limits_{x\in W_{n-2}}\sum\limits_{y\in
S(x)}\sigma(x)\sigma(y)\h_{xy,\sigma(x)\sigma(y)}],
\end{eqnarray}
where $L_n=\frac{Z_{n-1}}{Z_n}$.

For $\s\in V_{n-1}$ and $\eta \in W_{n}$, we rewrite the Hamiltonian
as follows:

\begin{eqnarray}\label{ham1}
H_n(\s\vee\eta)
&=&-J\sum\limits_{<x,y>\in V_{n-1}}\sigma(x)\sigma(y) -J
\sum\limits_{x\in W_{n-1}}\sum\limits_{y\in
S(x)}\sigma(x)\eta(y)\nonumber \\
\nonumber
&&-J_p\sum\limits_{>x,y<\in V_{n-1}}\sigma(x)\sigma(y) -J_p
\sum\limits_{x\in
W_{n-2}}\sum\limits_{z\in S^2(x)}\sigma(x)\eta(z)\\
&=&H_n(\s_{n-1})-J\sum\limits_{x\in W_{n-1}}\sum\limits_{y\in
S(x)}\sigma(x)\eta(y)-J_p \sum\limits_{x\in
W_{n-2}}\sum\limits_{z\in S^2(x)}\sigma(x)\eta(z).
\end{eqnarray}

Therefore, the last equality with \eqref{comp111} implies
\begin{eqnarray}\label{Kolmogorov1}
&&L_n\sum\limits_{\eta\in \Omega_{W_{n}}}\exp[-\beta
H_n(\s_{n-1})-\beta J\sum\limits_{x\in W_{n-1}}\sum\limits_{y\in
S(x)}\sigma(x)\eta(y)\nonumber\\
\nonumber &-&\beta
J_p\sum\limits_{x\in W_{n-2}}\sum\limits_{z\in
S^2(x)}\sigma(x)\eta(z)+ \sum\limits_{x\in W_{n-1}}\sum\limits_{y\in
S(x)}\sigma(x)\sigma(y)\h_{xy,\sigma(x)\sigma(y)}]\\
&=&\exp [-\beta H_n(\s_{n-1})+\sum\limits_{x\in
W_{n-2}}\sum\limits_{y\in
S(x)}\sigma(x)\sigma(y)\h_{xy,\sigma(x)\sigma(y)}],
\end{eqnarray}
%
Hence, one gets
\begin{eqnarray*}\label{Kolmogorov2}
&&L_n\prod\limits_{x\in W_{n-2}}\prod\limits_{y\in S(x)}\prod\limits_{z\in S(y)}\sum\limits_{\eta(z)\in \{\mp 1\}}\exp[\sigma(y)\eta(z)\h_{yz,\sigma(y)\eta(z)}+\beta \eta(z)(J\sigma(y)+ J_p\sigma(x))]\\\nonumber
&=&\prod\limits_{x\in W_{n-2}}\prod\limits_{y\in S(x)}\exp [\sigma(x)\sigma(y)\h_{xy,\sigma(x)\sigma(y)}].
\end{eqnarray*}

Let us fix $<x,y>$. Then considering all values of $\sigma(x),
\sigma(y)\in \{-1,+1\}$, from \eqref{Kolmogorov1}, we obtain

\begin{equation}\label{Kolmogorov3}
e^{h_{xy,++}+h_{xy,-+}}=\prod\limits_{z\in S(y)}\frac{\sum\limits_{\eta(z)\in \{\mp 1\}}\exp[\eta(z)(\h_{yz,+\eta(z)}+\beta (J+ J_p))]}{\sum\limits_{\eta(z)\in \{\mp 1\}}\exp[\eta(z)(\h_{yz,+\eta(z)}+\beta (J-J_p))]}
\end{equation}
\begin{equation}\label{Kolmogorov4}
e^{h_{xy,--}+h_{xy,+-}}=\prod\limits_{z\in S(y)}\frac{\sum\limits_{\eta(z)\in \{\mp 1\}}\exp[-\eta(z)(\h_{yz,-\eta(z)}+\beta(J+ J_p))]}{\sum\limits_{\eta(z)\in \{\mp 1\}}\exp[-\eta(z)(\h_{yz,-\eta(z)}-\beta(-J+J_p))]}
\end{equation}

\begin{equation}\label{Kolmogorov4a}
e^{h_{xy,++}+h_{xy,+-}}=\prod\limits_{z\in S(y)}\frac{\sum\limits_{\eta(z)\in \{\mp 1\}}\exp[\eta(z)(\h_{yz,+\eta(z)}+\beta (J+ J_p))]}{\sum\limits_{\eta(z)\in \{\mp 1\}}\exp[-\eta(z)(\h_{yz,-\eta(z)}-\beta(-J+J_p))]}
\end{equation}
These equations imply the desired ones.\\

{\sc Sufficiency}. Now we assume that the system of equations
\eqref{necessary} is valid, then from Lemma \ref{lemma1} one finds

$$e^{\sigma(x)\sigma(y)h_{xy,\sigma(x)\sigma(y)}}D(x,y)=\prod\limits_{z\in S(y)}\sum\limits_{\eta(z)\in \{\mp 1\}}
\exp[\sigma(y)\eta(z)\h_{yz,\sigma(y)\eta(z)}+\beta
\eta(z)(J\sigma(y)+J_p \sigma(x))],$$ for some constant $D(x,y)$
depending on $x$ and $y$.

 From the last equality, we obtain
\begin{eqnarray}\label{Kolmogorov5}
&&\prod\limits_{x\in W_{n-2}}\prod\limits_{y\in S(x)}D(x,y)e^{\sigma(x)\sigma(y)h_{xy,\sigma(x)\sigma(y)}}\\\nonumber
&=&\prod\limits_{x\in W_{n-2}}\prod\limits_{y\in S(x)}\prod\limits_{z\in S(y)}\sum\limits_{\eta(z)\in \{\mp 1\}}e^{[\sigma(y)\eta(z)\h_{yz,\sigma(y)\eta(z)}+\beta \eta(z)(J\sigma(y)+ J_p \sigma(x))]}.
\end{eqnarray}
Multiply both sides of the equation \eqref{Kolmogorov5} by $e^{-\beta H_{n-1}(\sigma)}$ and denoting

$$U_{n-1}=\prod\limits_{x\in W_{n-2}}\prod\limits_{y\in S(x)}D(x,y),$$
from \eqref{Kolmogorov5}, one has

\begin{eqnarray*}
&&U_{n-1}e^{-\beta H_{n-1}(\sigma)+\sum\limits_{x\in W_{n-2}}
\sum\limits_{y\in S(x)}\sigma(x)\sigma(y)h_{xy,\sigma(x)\sigma(y)}}\\
&=&\prod\limits_{x\in W_{n-2}}\prod\limits_{y\in S(x)}
\prod\limits_{z\in S(y)}e^{-\beta H_{n-1}(\sigma)}
\sum\limits_{\eta(z)\in \{\mp 1\}}e^{[\sigma(y)\eta(z)\h_{yz,\sigma(y)\eta(z)}+\beta \eta(z)(J\sigma(y)+ J_p \sigma(x))]}.
\end{eqnarray*}
which yields
$$
U_{n-1}Z_{n-1}\m^{(n-1)}_\h(\sigma)=\sum\limits_{\eta}e^{-\beta
H_{n}(\sigma\vee\eta) +\sum\limits_{x\in W_{n-2}}\sum\limits_{y\in
S(x)}\sigma(x)\sigma(y)h_{xy,\sigma(x)\sigma(y)}}.
$$
This means
\begin{eqnarray}\label{eq4}
U_{n-1}Z_{n-1}\m^{(n-1)}_\h(\sigma)=Z_{n}\sum\limits_{\eta}\m^{(n)}_\h(\sigma\vee\eta).
\end{eqnarray}
As $\m^{(n)}_\h$ ($n\geq 1$) is a probability  measure, i.e.
$$
\sum\limits_{\sigma\in \{-1,+1\}^{V_{n-1}}}\m^{(n-1)}_\h(\sigma)
=\sum\limits_{\sigma\in \{-1,+1\}^{V_{n-1}}}\sum\limits_{\eta \in
\{-1,+1\}^{W_{n}}}\m^{(n)}_\h(\sigma\vee\eta)=1.
$$
From these equalities and \eqref{eq4} we have
$Z_{n}=U_{n-1}Z_{n-1}$. This with \eqref{eq4} implies that
\eqref{comp} holds. The proof is complete.
\end{proof}
According to Theorem \ref{theorem1} the problem of describing the
Gibbs measures is reduced to the descriptions of the solutions of
the functional equations \eqref{necessary}.

\begin{cor}\label{compatibility}
The measures $\m^{(n)}_\h$, $ n=1,2,\dots$ satisfy the compatibility
condition \eqref{comp} if and only if for any $n\in \bn$ the
following equation holds:
\begin{equation}\label{canonic_u}
\left\{\begin{array}{ll}
u_{xy,1}=a\prod\limits_{z\in S(y)}\frac{bu_{yz,3}+1}{u_{yz,3}+b}\\[4mm]
u_{xy,2}=a\prod\limits_{z\in S(y)}\frac{(bu_{yz,2}+1)u_{yz,3}}{(u_{yz,3}+b)u_{yz,1}}\\[4mm]
u_{xy,3}=a\prod\limits_{z\in
S(y)}\frac{(bu_{yz,3}+1)u_{yz,1}}{(u_{yz,2}+b)u_{yz,3}}
 \end{array}\right.
\end{equation}
where, as before $a=\exp(2\b J)$, $b=\exp(2\b J_1)$, and
\begin{equation}\label{denuh}
\begin{array}{cc}
u_{xy,1}=a\cdot\exp\left(h_{xy,++}+h_{xy,-+}\right)\\
u_{xy,2}=a\cdot\exp\left(h_{xy,--}+h_{xy,-+}\right)\\
u_{xy,3}=a\cdot\exp\left(h_{xy,++}+h_{xy,+-}\right)
\end{array}
\end{equation}
\end{cor}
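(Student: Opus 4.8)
The plan is to read this corollary as nothing more than a change of variables sitting on top of Theorem \ref{theorem1}. That theorem already tells us that the compatibility condition \eqref{comp} is equivalent to the system \eqref{necessary}, so it suffices to show that \eqref{necessary} is equivalent to \eqref{canonic_u} once we substitute the new unknowns defined in \eqref{denuh}. First I would note that the left-hand sides of the three lines of \eqref{necessary} depend on the boundary field only through the three combinations $c_1=h_{xy,++}+h_{xy,-+}$, $c_2=h_{xy,--}+h_{xy,+-}$ and $c_3=h_{xy,++}+h_{xy,+-}$, while \eqref{denuh} sets $u_{xy,1}=a\,e^{c_1}$, $u_{xy,3}=a\,e^{c_3}$ and $u_{xy,2}=a\,e^{h_{xy,--}+h_{xy,-+}}=a\,e^{c_1+c_2-c_3}$. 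The map $(c_1,c_2,c_3)\mapsto(u_{xy,1},u_{xy,2},u_{xy,3})$ is a bijection of $\mathbb{R}^3$ onto $(0,\infty)^3$, so the ``if and only if'' will come for free once each right-hand side of \eqref{necessary} is shown to equal the corresponding right-hand side of \eqref{canonic_u}.

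For the first line this is a one-step computation. Writing $A=\exp(\beta J)$ and $B=\exp(\beta J_p)$ for the constants of Theorem \ref{theorem1}, and keeping in mind that the constants of this corollary are their squares, $a=A^2=\exp(2\beta J)$ and $b=B^2=\exp(2\beta J_p)$, I would multiply the numerator and denominator of the single factor in the first line of \eqref{necessary} by $e^{h_{yz,+-}}$, turning it into $\bigl(e^{h_{yz,++}+h_{yz,+-}}(AB)^2+1\bigr)\big/\bigl(e^{h_{yz,++}+h_{yz,+-}}A^2+B^2\bigr)$. Substituting $e^{h_{yz,++}+h_{yz,+-}}=u_{yz,3}/a$ collapses the quadratic weights to linear ones and the factor becomes exactly $(bu_{yz,3}+1)/(u_{yz,3}+b)$, i.e. the first line of \eqref{canonic_u}.

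The remaining two lines are handled the same way but require combining equations. Let $F_1,F_2,F_3$ denote the single factors on the right of the three lines of \eqref{necessary}. Multiplying the second line by $e^{h_{yz,-+}}$ and using $e^{h_{yz,--}+h_{yz,-+}}=u_{yz,2}/a$ gives $F_2=(bu_{yz,2}+1)/(u_{yz,2}+b)$. For the third line I would multiply by $e^{h_{yz,+-}}$ and rewrite its denominator via the identities $e^{h_{yz,--}+h_{yz,+-}}=u_{yz,2}u_{yz,3}/(a\,u_{yz,1})$ and $e^{h_{yz,+-}-h_{yz,-+}}=u_{yz,3}/u_{yz,1}$, obtaining $F_3=(bu_{yz,3}+1)u_{yz,1}\big/\bigl((u_{yz,2}+b)u_{yz,3}\bigr)$, which is already the third line of \eqref{canonic_u}. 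Finally, since the exponent of $u_{xy,2}$ is $c_1+c_2-c_3$, one has $u_{xy,2}=a\prod_{z\in S(y)}F_1F_2/F_3$; inserting the three factors and cancelling the common $(bu_{yz,3}+1)$ and $(u_{yz,2}+b)$ yields $(bu_{yz,2}+1)u_{yz,3}\big/\bigl((u_{yz,3}+b)u_{yz,1}\bigr)$, the second line.

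I expect the only real obstacle to be the bookkeeping in these last two lines. Unlike the first factor, whose expression involves only $u_{yz,3}$, the denominators here mix the indices $1,2,3$, so one has to spot the correct products and quotients of exponentials (such as $u_{yz,2}u_{yz,3}/u_{yz,1}$) that recast them into the $u$-variables. In carrying this out I also expect to reconcile a sign/typo in the printed form of \eqref{necessary}: the numerator of its third line should read $e^{-h_{yz,+-}}$, matching the numerator of the first line, as is clear from \eqref{Kolmogorov4a}. With that correction every step is an equivalence, and because the substitution \eqref{denuh} is invertible, the equivalence of \eqref{necessary} and \eqref{canonic_u} together with Theorem \ref{theorem1} gives the asserted characterization of \eqref{comp}.
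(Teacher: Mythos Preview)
Your proposal is correct and follows precisely the route the paper intends: the corollary is stated without proof as an immediate consequence of Theorem~\ref{theorem1} under the substitution \eqref{denuh}, and your argument fills in exactly that change-of-variables computation, including the correct identification of the misprint in the numerator of the third line of \eqref{necessary}.
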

It is worth mentioning that there are infinitely many solutions of
the system \eqref{necessary} corresponding to each solution of the
system of equations \eqref{canonic_u}. However, we show that each
solution of the system \eqref{canonic_u} uniquely determines a
Gibbs measure. We denote by $\mu_{\bf{u}}$ the Gibbs measure
corresponding to the solution  $\bf{u}$
 of \eqref{canonic_u}.

\begin{thm}\label{ccu}
There exists a unique Gibbs measure $\mu_\mathbf{u}$ associated with
the function $\mathbf{u}=\{\mathbf{u}_{xy}, \ \langle{x,y}\rangle\in
L \}$  where $\mathbf{u}_{xy}=(u_{xy,1},u_{xy,2},u_{xy,3})$ is a
solution of the system \eqref{canonic_u}.
\end{thm}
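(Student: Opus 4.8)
The plan is to separate the claim into an existence part and a uniqueness (well-definedness) part, and to recognize that the apparent many-to-one character of the correspondence $\h\mapsto\mathbf{u}$ recorded in \eqref{denuh} is a gauge freedom that is invisible to the measure.

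For existence, I would start from a solution $\mathbf{u}=\{(u_{xy,1},u_{xy,2},u_{xy,3})\}$ of \eqref{canonic_u} with positive components and first recover a boundary field $\h$. The three relations in \eqref{denuh} prescribe the three linear combinations $h_{xy,++}+h_{xy,-+}$, $h_{xy,--}+h_{xy,-+}$ and $h_{xy,++}+h_{xy,+-}$ of the four components $(h_{xy,++},h_{xy,+-},h_{xy,-+},h_{xy,--})$; since these functionals are linearly independent, on each bond one can fix one gauge component (say $h_{xy,-+}=0$) and then solve uniquely for the remaining three, producing at least one field $\h$ with the given $\mathbf{u}$. By Corollary \ref{compatibility}, the fact that $\mathbf{u}$ solves \eqref{canonic_u} is exactly the compatibility condition \eqref{comp} for the family $\{\m^{(n)}_\h\}$, so Kolmogorov's theorem yields a limiting Gibbs measure $\m_\h$ satisfying \eqref{CM}.

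The substance of the theorem, and the step I expect to be the main obstacle, is showing that $\m_\h$ depends only on $\mathbf{u}$ and not on the gauge choice. Here I would take two fields $\h,\h'$ yielding the same $\mathbf{u}$, so that $\h-\h'$ lies in the common kernel of the three functionals above. A short computation identifies this kernel, bond by bond, as the line spanned by the vector with $h_{xy,++}=h_{xy,--}=-1$ and $h_{xy,+-}=h_{xy,-+}=+1$; hence there are constants $c_{xy}$ with $h_{xy,++}-h'_{xy,++}=h_{xy,--}-h'_{xy,--}=-c_{xy}$ and $h_{xy,+-}-h'_{xy,+-}=h_{xy,-+}-h'_{xy,-+}=c_{xy}$. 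The decisive observation is that the boundary weight $\sigma(x)\sigma(y)\,h_{xy,\sigma(x)\sigma(y)}$ appearing in \eqref{mu} then changes by exactly $-c_{xy}$ for \emph{every} value of the pair $(\sigma(x),\sigma(y))$, so the whole boundary term in the exponent differs between $\h$ and $\h'$ only by the configuration-independent constant $C_n=-\sum_{x\in W_{n-1}}\sum_{y\in S(x)}c_{xy}$.

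To conclude, I would note that since $C_n$ is independent of $\sigma$, the factor $e^{C_n}$ appears identically in the numerator of \eqref{mu} and in the partition function $Z_n$ of \eqref{Zn}, and therefore cancels; thus $\m^{(n)}_\h=\m^{(n)}_{\h'}$ for every $n$, and by the uniqueness clause of Kolmogorov's theorem the limiting measures coincide, $\m_\h=\m_{\h'}$. This makes $\m_{\mathbf{u}}:=\m_\h$ well defined and unique. The crux is precisely verifying that the reduction \eqref{denuh} discards exactly the additive freedom of adding a bond-constant to the boundary weight, which normalization washes out, and nothing more; once this gauge-invariance is established the rest is bookkeeping.
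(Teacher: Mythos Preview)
Your proposal is correct and follows essentially the same approach as the paper: both arguments recognize that the fiber of $\h$'s over a given $\mathbf{u}$ is a one-parameter family and that this free parameter contributes only a configuration-independent factor to the boundary weight, which cancels upon normalization in \eqref{mu}--\eqref{Zn}. The paper parametrizes the fiber explicitly by $h_{xy,++}$ and factors the boundary product accordingly, whereas you identify the fiber more abstractly as the kernel of the three linear functionals in \eqref{denuh} and compute the resulting shift directly; the content is the same.
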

\begin{proof}
Let $\mathbf{u}=\{\mathbf{u}_{xy}, \ \langle{x,y}\rangle\in L \}$ be
a  function, where $\mathbf{u}_{xy}=(u_{xy,1},u_{xy,2},u_{xy,3})$ is
a solution of the system \eqref{canonic_u}. Then, for any
$h_{xy,++}\in\br$ a function $\mathbf{h}=\{\mathbf{h}_{xy},\
\langle{x,y}\rangle\in L\}$ defined by
$$
\mathbf{h}_{xy}=\left(h_{xy,++},\
\log\left(\frac{u_{xy,3}}{a}\right)-h_{xy,++}, \
\log\left(\frac{u_{xy,1}}{a}\right)-h_{xy,++},\
\log\left(\frac{u_{xy,2}}{u_{xy,1}}\right)+h_{xy,++}\right)
$$
is a solution of \eqref{necessary}.

Now fix $n\geq1$. Since $|W_{n-1}|=k^{n-1}$ and $|S(x)|=k$ we get
$|L_{n}\setminus L_{n-1}|=k^n$. Let $\s$ be any configuration on
$\Om_{V_n}$. Denote
\[
\begin{array}{ll}
\cn_{1,n}(\s)=\{\langle{x,y}\rangle\in{L_n\setminus{L_{n-1}}}:\ \s(x)=1,\ \s(y)=1,\ x\in W_{n-1},\ y\in S(x)\}\\
\cn_{2,n}(\s)=\{\langle{x,y}\rangle\in{L_n\setminus{L_{n-1}}}:\ \s(x)=1,\ \s(y)=-1,\ x\in W_{n-1},\ y\in S(x)\}\\
\cn_{3,n}(\s)=\{\langle{x,y}\rangle\in{L_n\setminus{L_{n-1}}}:\ \s(x)=-1,\ \s(y)=1,\ x\in W_{n-1},\ y\in S(x)\}\\
\cn_{4,n}(\s)=\{\langle{x,y}\rangle\in{L_n\setminus{L_{n-1}}}:\
\s(x)=-1,\ \s(y)=-1,\ x\in W_{n-1},\ y\in S(x)\}
\end{array}
\]

We have
$$
\prod_{x\in W_{n-1}\atop{y\in
S(x)}}\exp\left\{h_{xy,\s(x)\s(y)}\s(x)\s(y)\right\}=
\prod\limits_{\langle
x,y\rangle\in\cn_{1,n}(\s)}\exp\left\{h_{xy,++}\right\}
\prod\limits_{\langle
x,y\rangle\in\cn_{2,n}(\s)}\frac{a\cdot\exp\left\{h_{xy,++}\right\}}{u_{xy,3}}
$$
$$
\times\prod\limits_{\langle
x,y\rangle\in\cn_{3,n}(\s)}\frac{a\cdot\exp\left\{h_{xy,++}\right\}}{u_{xy,1}}
\prod\limits_{\langle
x,y\rangle\in\cn_{4,n}(\s)}\frac{u_{xy,2}\exp\left\{h_{xy,++}\right\}}{u_{xy,1}}
$$
$$
=\prod\limits_{\langle x,y\rangle\in L_n\setminus
L_{n-1}}\exp\left\{h_{xy,++}\right\}\prod\limits_{\langle
x,y\rangle\in\cn_{2,n}(\s)}\frac{a}{u_{xy,3}} \prod\limits_{\langle
x,y\rangle\in\cn_{3,n}(\s)}\frac{a}{u_{xy,1}} \prod\limits_{\langle
x,y\rangle\in\cn_{4,n}(\s)}\frac{u_{xy,2}}{u_{xy,1}}
$$
By means of the last equality, from \eqref{mu} and \eqref{Zn} we
find
$$
\mu_{\h}^{(n)}(\s)=\frac{\exp\{-\b H_n(\s)\}\prod\limits_{x\in
W_{n-1}\atop{y\in
S(x)}}\exp\left\{h_{xy,\s(x)\s(y)}\s(x)\s(y)\right\}}
{\sum\limits_{\w\in\Om_{V_n}}\exp\{-\b H_n(\w)\}\prod\limits_{x\in
W_{n-1}\atop{y\in
S(x)}}\exp\left\{h_{xy,\s(x)\w(y)}\s(x)\w(y)\right\}}
$$
\begin{equation}\label{mu_u}
=\frac{\exp\{-\b H_n(\s)\}\prod\limits_{\langle
x,y\rangle\in\cn_{2,n}(\s)}\frac{a}{u_{xy,3}} \prod\limits_{\langle
x,y\rangle\in\cn_{3,n}(\s)}\frac{a}{u_{xy,1}} \prod\limits_{\langle
x,y\rangle\in\cn_{4,n}(\s)}\frac{u_{xy,2}}{u_{xy,1}}}
{\sum\limits_{\w\in\Om_{V_n}}\exp\{-\b
H_n(\w)\}\prod\limits_{\langle
x,y\rangle\in\cn_{2,n}(\w)}\frac{a}{u_{xy,3}} \prod\limits_{\langle
x,y\rangle\in\cn_{3,n}(\w)}\frac{a}{u_{xy,1}} \prod\limits_{\langle
x,y\rangle\in\cn_{4,n}(\w)}\frac{u_{xy,2}}{u_{xy,1}}}
\end{equation}

One can see the right hand side of \eqref{mu_u} does not depend to
$h_{xy,++}$. So, we can say that each solution $\mathbf{u}$ of the
system \eqref{canonic_u} uniquely determines only one Gibbs measure
$\mu_{\mathbf{u}}$.
\end{proof}

\begin{rem}
Hence, due to Theorem \ref{ccu} there exists a phase transition for
the model \eqref{ham1} with $J_0=0$ if and only if the equation
\eqref{canonic_u} has at least two solutions.
\end{rem}

\begin{rem} We point out that in the original work \cite{V} modulated phases were found in the frustrated
regime, when the next-nearest-neighbor interaction $J_p$ is
negative. To obtain these kinds of phases, one needs to find
periodic solutions\footnote{Periodicity of the solution
$\mathbf{u}=\{\mathbf{u}_{xy}, \ \langle{x,y}\rangle\in L \}$ can
be defined via representing the tree as a free group. We refer the
reader to \cite{Roz} for detail information.} of the equation
\eqref{canonic_u}. We will show that to find general solutions
even in the case of translation-invariant ones of the system is
not an easy job. Our main aim in this paper is first rigorously
establish the existence of the phase transition by finding
translation-invariant solutions of the system. The case $J_p<0$ is
not easy and will require a lot of effort to explicitly find
periodic solutions. Note that even for the usual Ising model (anti
ferromagnetic case), up to now, not all periodic solutions have
been found (see \cite{Roz} for the review).
\end{rem}
\section{The existence of Gibbs Measures}\label{e-Gibbs measures}

In this section we are going to establish the existence of Gibbs
measures by analyzing the equation \eqref{canonic_u}.

Recall that ${\bf u}=\{{\bf u}_{xy}\}_{\langle{x,y}\rangle\in L}$
is a translation-invariant function, if one has
$\mathbf{u}_{xy}=\mathbf{u}_{zw}$ for all
$\langle{x,y}\rangle,\langle{z,w}\rangle\in L$. A measure $\m_{\bf
u}$, corresponding to a translation-invariant function ${\bf u}$,
is called a {\it translation-invariant Gibbs measure}.

Solving the equation \eqref{canonic_u}, in general, is rather very
complex. Therefore, let us first restrict ourselves to the
description of its translation-invariant solutions. Hence,
\eqref{canonic_u} reduces to the following one
\begin{equation}\label{tru}
\left\{\begin{array}{ll}
u_1=a\left(\frac{bu_3+1}{u_3+b}\right)^k\\[3mm]
u_2=a\left(\frac{(bu_2+1)u_3}{(u_3+b)u_1}\right)^k\\[3mm]
u_3=a\left(\frac{(bu_3+1)u_1}{(u_2+b)u_3}\right)^k
\end{array}\right.
\end{equation}

\subsection{Solution of the system \eqref{tru}}

In this subsection, we are aiming to study the set of all solutions
of the system \eqref{tru}.

Denote $\sqrt[k]{u_1}=x_1,\ \sqrt[k]{u_2}=x_2,\ \sqrt[k]{u_3}=x_3$,
$\tilde{a}=\sqrt[k]{a}$. Then from \eqref{tru} we obtain
\begin{equation}\label{xyz}
\left\{\begin{array}{lll}
x_1=\tilde{a}\frac{bx_3^k+1}{x_3^k+b}\\[3mm]
x_2=\tilde{a}\frac{(bx_2^k+1)x_3^k}{(x_3^k+b)x_1^k}\\[3mm]
x_3=\tilde{a}\frac{(bx_3^k+1)x_1^k}{(x_2^k+b)x_3^k}
\end{array}\right.
\end{equation}
Define the following sets
$$
\begin{array}{lll}
\ca_1=\left\{{\bf{x}}\in\br^3_+: x_1=x_2\right\}, & &
\ca_2=\left\{{\bf{x}}\in\br^3_+: x_1=x_3\right\}\\
\ca_3=\left\{{\bf{x}}\in\br^3_+: x_2=x_3\right\}, & &
\ca=\left\{{\bf{x}}\in\br^3_+: x_1=x_2=x_3\right\}
\end{array}
$$
\begin{prop}\label{pr1}
Let $\bf{x}$ be a solution of \eqref{xyz}. Then $\bf{x}\in\ca$ if
and only if ${\bf{x}}\in\ca_1\cup\ca_2\cup\ca_3$.
\end{prop}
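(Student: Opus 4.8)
The forward implication is immediate: if $\mathbf{x}\in\ca$ then $x_1=x_2=x_3$, so in particular $x_1=x_2$, whence $\mathbf{x}\in\ca_1\subseteq\ca_1\cup\ca_2\cup\ca_3$. The content of the statement is therefore the converse, namely that any solution of \eqref{xyz} lying in one of $\ca_1,\ca_2,\ca_3$ already lies in $\ca$.

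The plan is to treat the three sets separately, in each case using two suitably chosen equations of \eqref{xyz} to recover the missing equality. The common strategy is this: substitute the assumed pairwise equality into two of the three equations, equate the resulting right-hand sides, and observe that after the substitution the common factors — always $\tilde a$, together with whichever of the expressions $bx_3^k+1$ and $x_3^k+b$ survive on both sides — cancel, leaving a bare identity between $k$-th powers of the coordinates. Since $\mathbf{x}\in\br^3_+$, equality of $k$-th powers forces equality of the coordinates themselves, supplying the missing relation and placing $\mathbf{x}$ in $\ca$.

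Concretely: if $\mathbf{x}\in\ca_1$ (so $x_1=x_2$), I would compare the first and second equations of \eqref{xyz}; cancelling $x_3^k+b$ and clearing $x_1^k$, the cross-multiplied relation $(bx_3^k+1)x_1^k=(bx_1^k+1)x_3^k$ collapses to $x_1^k=x_3^k$, hence $x_1=x_3$. If $\mathbf{x}\in\ca_3$ (so $x_2=x_3$), I would compare the second and third equations: substituting $x_2^k=x_3^k$ makes $bx_3^k+1$ and $x_3^k+b$ common, so what remains is $x_3^k/x_1^k=x_1^k/x_3^k$, i.e. $x_1^{2k}=x_3^{2k}$, giving $x_1=x_3$. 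Finally, if $\mathbf{x}\in\ca_2$ (so $x_1=x_3$), I would compare the first and third equations: with $x_3^k=x_1^k$ the factor $bx_1^k+1$ is common and nonzero, leaving $x_1^k+b=x_2^k+b$, hence $x_1=x_2$. In every case the original equality together with the one just derived yields $x_1=x_2=x_3$.

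The only delicate point is bookkeeping: one must pair the correct two equations in each case so that the nonlinear factors $bx_3^k+1$ and $x_3^k+b$ drop out cleanly, since pairing the wrong ones leaves these factors entangled and hides the cancellation. Positivity of $\mathbf{x}$ (and of $b,\tilde a$) is used twice — to justify cancelling the factors and to pass from equality of $k$-th powers to equality of the reals — but beyond that the argument is elementary algebra, so I expect no real obstacle once the pairings are fixed.
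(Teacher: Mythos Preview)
Your proof is correct and follows essentially the same case-by-case algebraic approach as the paper. The only cosmetic difference is in the $\ca_1$ case: you equate the first and second equations of \eqref{xyz} directly, while the paper multiplies the second and third, divides by the first, and then substitutes into the third --- but both routes collapse to $(bx_3^k+1)x_1^k=(bx_1^k+1)x_3^k$, and your handling of $\ca_2$ and $\ca_3$ matches the paper exactly.
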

\begin{proof}
Assume that $x_1=x_2$, then multiplying the second and the third
equalities of \eqref{xyz} and dividing the obtained one by the first
equality, one finds
$$
\frac{x_2x_3}{x_1}=\tilde{a}\frac{bx_2^k+1}{x_2^k+b}.
$$
The assumption yields $x_3=\tilde{a}\frac{bx_1^k+1}{x_1^k+b}$. Now
inserting the last equality into the third one of \eqref{xyz}, we
obtain $x_1=x_3$.

Let us suppose $x_1=x_3$, then from the first and the third
equations of \eqref{xyz}, one gets $x_1^k+b=x_2^k+b$, which implies
$x_1=x_2$.

Now assume that $x_2=x_3$. In this case dividing the second by the
third equations of \eqref{xyz} we obtain
$\left(\frac{x_3}{x_1}\right)^k=1$, which yields $x_1=x_3$. This
completes the proof.
\end{proof}

A natural question arises: Does there exist a solution on $\mathbb
R^3_+\setminus\mathcal A$? We will try to answer it in the next
subsection 4.2.

Now we consider the case $u:=u_1=u_2=u_3$. Then denoting $z=bu$
from \eqref{tru} we obtain
\begin{equation}\label{UU}
\left(\frac{z+1}{z+b^2}\right)^k=\frac{b^{k-1}}{a}z
\end{equation}

To solve the last equation we apply the following well-known fact.

\begin{prop}\cite{P}\label{prest}
The equation
$$
\left(\frac{1+x}{b+x}\right)^{m-1}=ax
$$
(with $x\geq0, m\geq2, a>0, b>0$) has one solution if either $m=2$
or $b\leq\left(\frac{m}{m-2}\right)^2$. If $m>2$ and
$b>\left(\frac{m}{m-2}\right)^2$ then there exist $\eta_1(b,m),
\eta_2(b,m)$ with $0<\eta_1(b,m)<\eta_2(b,m)$ such that the equation
has three solutions if $\eta_1(b,m)<a<\eta_2(b,m)$ and has two
solution if either $a=\eta_1(b,m)$ or $a=\eta_2(b,m)$. In fact
$$
\eta_i(b,m)=\frac{1}{x_i}\left(\frac{1+x_i}{b+x_i}\right)^{m-1},
$$
where $x_1, x_2$ are solutions of
$$
x^2+[2-(b-1)(m-2)]x+b=0.
$$
\end{prop}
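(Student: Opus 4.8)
The plan is to recast the equation as a single-variable fixed-level problem and to read off the number of solutions from the monotonicity profile of one function. Define
\[
f(x)=\frac{1}{x}\left(\frac{1+x}{b+x}\right)^{m-1},\qquad x>0,
\]
so that the equation $\left(\frac{1+x}{b+x}\right)^{m-1}=ax$ is equivalent to $f(x)=a$. Since $b>0$ and $m\geq2$, the function $f$ is smooth and strictly positive on $(0,\infty)$, with $f(x)\to+\infty$ as $x\to0^+$ and $f(x)\to0^+$ as $x\to+\infty$. Hence the number of solutions equals the number of times the horizontal level $a$ meets the graph of $f$, and everything reduces to locating the critical points of $f$.

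First I would differentiate logarithmically: from $\ln f(x)=-\ln x+(m-1)[\ln(1+x)-\ln(b+x)]$ one obtains
\[
\frac{f'(x)}{f(x)}=-\frac1x+\frac{(m-1)(b-1)}{(1+x)(b+x)}.
\]
Clearing the positive denominator $x(1+x)(b+x)$, a short computation shows that $f'(x)$ has the same sign as $-Q(x)$, where $Q(x)=x^2+[2-(b-1)(m-2)]x+b$ is exactly the quadratic in the statement. Thus the critical points of $f$ in $(0,\infty)$ are precisely the positive roots of $Q$, and since $Q$ opens upward the sign pattern of $f'$ is governed by how many such roots there are.

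Next I would analyze $Q$ by Vieta's formulas: its roots have product $b>0$ and sum $(b-1)(m-2)-2$. I claim $Q$ has two distinct positive roots if and only if $m>2$ and $b>\left(\frac{m}{m-2}\right)^2$. Indeed, positivity of both roots forces a positive sum, i.e. $(b-1)(m-2)>2$, whence a positive discriminant reads $(b-1)(m-2)-2>2\sqrt{b}$; the substitution $t=\sqrt b$ and division by the positive factor $t+1$ collapse this to $t>\frac{m}{m-2}$, which is the stated threshold. When $m=2$ or $b\leq\left(\frac{m}{m-2}\right)^2$, the parabola $Q$ stays nonnegative on $(0,\infty)$ (it has no two distinct positive roots), so $f'\leq0$ and $f$ is monotone decreasing; together with the boundary values this yields exactly one solution for every $a>0$.

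In the remaining case ($m>2$, $b>\left(\frac{m}{m-2}\right)^2$) let $0<x_1<x_2$ be the two roots of $Q$. Since $Q>0$ outside $[x_1,x_2]$ and $Q<0$ inside, $f$ decreases on $(0,x_1)$ from $+\infty$ to $f(x_1)$, increases on $(x_1,x_2)$, and decreases on $(x_2,\infty)$ down to $0$; thus $x_1$ is a local minimum and $x_2$ a local maximum, with $f(x_1)<f(x_2)$. Setting $\eta_i=f(x_i)=\frac{1}{x_i}\left(\frac{1+x_i}{b+x_i}\right)^{m-1}$, a branch-by-branch count of where the level $a$ meets the three monotone pieces gives three solutions when $\eta_1<a<\eta_2$, exactly two when $a=\eta_1$ or $a=\eta_2$, and one otherwise, as claimed. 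The step needing the most care is the algebraic equivalence of the previous paragraph — checking that ``two distinct positive roots of $Q$'' collapses cleanly to the single inequality $b>\left(\frac{m}{m-2}\right)^2$, and verifying that the boundary (double-root) case $b=\left(\frac{m}{m-2}\right)^2$ falls on the one-solution side, since there $f'$ vanishes at a single point without changing sign.
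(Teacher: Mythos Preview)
Your argument is correct. The paper itself does not prove this proposition: it is quoted from Preston \cite{P} and used as a black box. Your approach---rewriting the equation as $f(x)=a$ with $f(x)=x^{-1}\left(\frac{1+x}{b+x}\right)^{m-1}$, computing $f'/f$, and reducing the critical-point analysis to the sign of the quadratic $Q(x)=x^2+[2-(b-1)(m-2)]x+b$---is the standard way this result is established. The algebra collapsing the two-root condition to $b>\left(\frac{m}{m-2}\right)^2$ via the factorisation $b-1=(\sqrt{b}-1)(\sqrt{b}+1)$ is clean, and you correctly handle the borderline case $b=\left(\frac{m}{m-2}\right)^2$ (double root of $Q$, hence $f$ still strictly monotone). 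Nothing to add.
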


Hence, according to Proposition \ref{prest} the equation
\eqref{UU} can be solved under certain conditions which provide
sufficient conditions for the existence of Gibbs measures.
\subsection{Solution on  $\mathbb
R^3_+\setminus\mathcal A$}. In this subsection, for the sake of
simplicity, we assume that the order of the tree is two, i.e. $k=2$.

Suppose that ${\bf{x}}\in\br^3_+\setminus\ca$ is a solution of
\eqref{xyz}. Due to Proposition \ref{pr1} we assume that $x_1=x,\
x_2=mx,\ x_3=tx$, where $x,m,t>0,\ m\neq1,\ t\neq1$ and $m\neq t$.
It then follows from \eqref{xyz} that
\begin{equation}\label{x,mx,tx}
\left\{\begin{array}{lll}
x=\tilde{a}\frac{bt^2x^2+1}{t^2x^2+b}\\[3mm]
mx=\tilde{a}t^2\frac{bm^2x^2+1}{t^2x^2+b}\\[3mm]
t^3x=\tilde{a}\frac{bt^2x^2+1}{m^2x^2+b}
\end{array}\right.
\end{equation}
Note that $mt=1$ if and only if $m=t=1$. Indeed, multiplying the
second and the third equations of \eqref{x,mx,tx}, and dividing the
result by the first one, we get
$$
mtx=\tilde{a}\frac{bm^2x^2+1}{m^2x^2+b}
$$
The last equality with the first equation of \eqref{xyz} implies
that $m=t=1$ if $mt=1$. So, in the current setting, we may assume
that $mt\neq1$. From \eqref{x,mx,tx} one finds
\begin{equation}\label{1}
\left\{\begin{array}{ll}
b(m^2-m)t^2x^2=m-t^2\\
(m^2t-1)t^2x^2=b(1-t^3)
\end{array}\right.
\end{equation}
Since $x>0$, we get $1<t<m^{-2}$ or $m^{-2}<t<1$. Plugging \eqref{1}
into the first equation of \eqref{x,mx,tx}, one finds
\begin{equation}\label{2}
x=\tilde{a}b^{-1}\frac{m^2t-1}{t(m^2-m)}
\end{equation}
Substituting \eqref{2} into \eqref{1} yields
$$
\tilde{a}^2(m^2t-1)^2=b(m^2-m)(m-t^2)
$$
The last equality with \eqref{1} implies that $(x,mx,tx)\in\mathbb
R^3_+\setminus\mathcal A$ is a solution of \eqref{x,mx,tx} if and
only if the parameters $m$ and $t$ satisfy the following equations
$$
\left\{\begin{array}{lll}
b^2(m^2-m)(1-t^3)=(m-t^2)(m^2t-1)\\
\tilde{a}^2(m^2t-1)^2=b(m^2-m)(m-t^2)\\
1<t<m^{-2}\ \ \mbox{or}\ \ m^{-2}<t<1
\end{array}\right.
$$
where $x$ is defined by \eqref{2}.

It is easy to check that
$\left(\sqrt{b},3\sqrt{b},\frac{\sqrt{b}}{2}\right)$ is a solution
of \eqref{x,mx,tx},
 if $\tilde{a}=\frac{6}{7}\sqrt{b^3}$ and $b=\sqrt{\frac{11}{6}}$.

\section{The existence of phase transition}\label{phase transition}

In this section, we restrict ourselves to the case $k=2$ and
\begin{eqnarray}\label{req1}
h_{xy,++}= h_{xy,-+}=h_1 \mbox{ and } h_{xy,--}= h_{xy,+-}=h_2.
\end{eqnarray}

The analysis of the solution of the equations \eqref{necessary} is
rather tricky. In this section, we will study the
translation-invariant solutions.

Denoting $\ln u_1=h_{xy,++}= h_{xy,-+}$ and $\ln u_2=h_{xy,--}=
h_{xy,+-}$ for any $x,y\in V$, from \eqref{necessary} one can
produce
\begin{equation}\label{newcase1}
u_1^2=\bigg(\frac{a^2 b^2 u_1u_2+1}{a^2u_1u_2+b^2}\bigg)^2=u_2^2,
\end{equation}
where $a=e^{\b J}$ and $b=e^{\b J_p}$. This means $u_1=u_2$,
therefore letting $u:=u_1=u_2$, we have
$$
u=\frac{(a b)^2 u^2+1}{a^2 u^2+b^2}.$$ Now putting $c:=a^2$ and
$d:=b^2$, then the last equation reduces to
\begin{equation}\label{case12}
u=g(u),
\end{equation}
where
\begin{equation}\label{case11}
g(u)=\frac{cd u^2+1}{c u^2+d}.
\end{equation}

Note that if there is more than one positive solutions  of
\eqref{case12}, then we have more than one translation-invariant
Gibbs measures corresponding to the solution of \eqref{case12}.


\begin{prop}\label{Proposition-k=2}
The equation $u=\frac{c d u^2+1}{c u^2+d}$ has one solution if
either $c\leq 1$ or $d<3$. If $d\geq3$ then there exist
$\eta_1(d)$, $\eta_2(d)$ with $0<\eta_1(d)<\eta_2(d)$ such that
equation \eqref{case12} has three solutions if
$\eta_1(d)<c<\eta_2(d)$ and has two solutions if either
$\eta_1(d)=c$ or $\eta_2(d)=c$.
\end{prop}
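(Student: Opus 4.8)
The plan is to reduce everything to counting the positive fixed points of the single map $g$ in \eqref{case11}, and to analyze that count through an auxiliary one–parameter curve. First I would record the elementary dichotomy coming from the sign of $g'$. A direct differentiation gives
\[
g'(u)=\frac{2c(d^2-1)u}{(cu^2+d)^2},
\]
so $g$ is strictly increasing on $(0,\infty)$ when $d>1$, strictly decreasing when $d<1$, and identically equal to $1$ when $d=1$. In the latter two cases $u-g(u)$ is strictly monotone (or $g\equiv1$), hence \eqref{case12} has exactly one positive solution; this already settles a large part of the ``one solution'' regime. Thus the substantive case is $d>1$, where $g$ is increasing with $g(0)=1/d$ and $\lim_{u\to\infty}g(u)=d$, and (since $g'(0)=0$ and $g$ is convex then concave) the question becomes how many times an increasing sigmoidal graph meets the diagonal.

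The main idea is to invert the fixed-point relation and read $c$ as a function of $u$. Rewriting $u=g(u)$ as $cu^2(u-d)=1-du$ gives
\[
c=c(u):=\frac{du-1}{u^2(d-u)},
\]
which is positive precisely on the interval $I_d=(1/d,d)$ when $d>1$. Hence every positive solution of \eqref{case12} lies in $I_d$, and the number of solutions for a prescribed $c$ equals the number of preimages $c(u)=c$ with $u\in I_d$. On $I_d$ one has $c(u)\to 0^+$ as $u\to(1/d)^+$ and $c(u)\to+\infty$ as $u\to d^-$, so the count is governed by the interior critical points of $c$. A short computation yields $c'(u)=0\iff q(u):=2du^2-(d^2+3)u+2d=0$, a quadratic whose discriminant is $(d^2-1)(d^2-9)$.

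Since $d>1$, this discriminant is positive iff $d>3$; moreover $q(1/d)=(d^2-1)/d>0$, $q(d)=d(d^2-1)>0$, and the vertex of $q$ lies in $I_d$, so for $d>3$ both roots $u^-<u^+$ of $q$ lie in $I_d$ and satisfy $u^-u^+=1$ by Vieta. Consequently, for $1<d<3$ the curve $c(\cdot)$ is strictly monotone on $I_d$ (one solution for each $c$), while for $d>3$ it increases, then decreases, then increases, producing a local maximum $\eta_2(d):=c(u^-)$ and a local minimum $\eta_1(d):=c(u^+)$ with $\eta_1<\eta_2$. With $\eta_1,\eta_2$ defined this way, a standard level-set count on the three monotone branches gives three solutions for $\eta_1<c<\eta_2$, two for $c\in\{\eta_1,\eta_2\}$, and one otherwise, which is the asserted trichotomy.

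The step I expect to be delicate is converting the monotonicity picture of $c(\cdot)$ into the precise count stated in terms of the parameters, in particular locating $\eta_1,\eta_2$ relative to the threshold $c=1$. Here the identity $c(1)=1$ (valid for every $d$), together with $u^-<1<u^+$, forces $u=1$ onto the decreasing middle branch, so that $1\in(\eta_1,\eta_2)$; the boundary behaviour must therefore be examined carefully, and I would re-check the claimed role of the condition $c\le1$ against this identity before finalizing the statement. An alternative, stylistically closer to the rest of the paper, would be to produce an explicit change of variables turning \eqref{case12} into the equation of Proposition \ref{prest} with $m=3$ and to read off $\eta_1,\eta_2$ from that result; the obstacle there is that matching the cubic $cu^3-cdu^2+du-1=0$ to $(1+x)^2=\tilde a\,x(\tilde b+x)^2$ requires a nontrivial rescaling $x=\lambda u$ whose determination leads to an auxiliary quartic, so the self-contained calculus argument above seems the cleaner route.
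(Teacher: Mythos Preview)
The paper does not actually prove this proposition: it simply remarks that ``the proof can be done by using the similar method as in \cite{NHSS}''. Your inversion $c=c(u)=\dfrac{du-1}{u^{2}(d-u)}$ on $I_d=(1/d,d)$ and the analysis of the quadratic $q(u)=2du^{2}-(d^{2}+3)u+2d$ give a clean, self-contained argument that is more direct than routing through Proposition~\ref{prest}; the computations (discriminant $(d^{2}-1)(d^{2}-9)$, $q(1/d)>0$, $q(d)>0$, $u^{-}u^{+}=1$) are all correct.

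The ``delicate'' point you flag is in fact a genuine defect of the \emph{statement}, not of your proof. Since $c(1)=1$ for every $d$, and $u^{-}<1<u^{+}$ when $d>3$, one always has $1\in(\eta_{1}(d),\eta_{2}(d))$. Concretely, for $c=1$ the fixed-point equation factors as
\[
(u-1)\bigl(u^{2}+(1-d)u+1\bigr)=0,
\]
and the quadratic factor has two additional positive roots as soon as $d>3$. Moreover $\eta_{1}(d)\to 0$ as $d\to\infty$, so arbitrarily small values $c<1$ also give three solutions (e.g.\ $c=\tfrac12$, $d=10$ already does). Hence the clause ``one solution if $c\le 1$'' cannot be correct as written; the accurate conclusion from your analysis is: one solution if $d\le 3$ or $c\notin[\eta_{1}(d),\eta_{2}(d)]$, two if $c\in\{\eta_{1}(d),\eta_{2}(d)\}$, three if $\eta_{1}(d)<c<\eta_{2}(d)$, with the interval nonempty only for $d>3$. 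At $d=3$ the quadratic $q$ has a double root at $u=1$, $c(\cdot)$ is monotone with a horizontal inflection, and there is a unique solution for every $c$; so the strict inequality $d>3$ is needed for the three-solution regime, and the paper's ``$d\ge 3$'' together with ``$\eta_1<\eta_2$'' should also be read as $d>3$.
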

\begin{figure} [!htbp]\label{roots}
 \centering
\includegraphics[width=70mm]{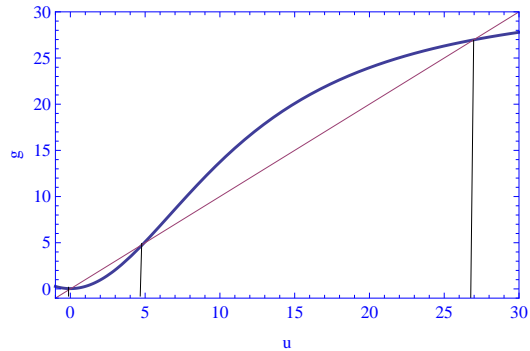}
\caption{Graph of the function $g$ defined in \eqref{case11} for
the parameters $J=-1.85,J_p=4.5,T=2.6$.}\label{roots}
\end{figure}

The proof of  Proposition \ref{Proposition-k=2} can be done by using
the similar method as in \cite{NHSS}.


From the proposition we infer that if one has $\b J_p>(\ln{3})/2$
then there are three translation-invariant Gibbs measures for the
model, which yields the existence of the phase transition.

Let us give an illustrative example.  Fig. \ref{roots} shows that
there are 3 positive fixed points of the function \eqref{case11},
if we take $J=-1.85,J_p=4.5,T=2.6$. In Fig \ref{roots}, we can
find three fixed points of the function $g$ as
$u_1=0.0316222,u_2=4.86623,u_3=26.9681$ corresponding to the
parameters $J = -1.85, J_p = 4.5, T = 2.6$. We have 3 TIGMs
associated with the fixed points
$u_1=0.0316222,u_2=4.86623,u_3=26.9681$. Therefore, the phase
transition for the model \eqref{hm} occurs.

\section{Free energy}\label{Free energy}

In this section, we study the free energy of depending on the
boundary conditions for the Ising-Vannimenus model on the Cayley
tree. From the pervious section we know that for any boundary
condition satisfying the equations \eqref{necessary} there exist
Gibbs measures corresponding to the model. Recall that the partition
function of the model is
\begin{eqnarray}\label{partition1}
Z_n=Z_n(\beta, \h)=\sum\limits_{\s\in \Omega_{V_n}}\exp \big\{-\beta
H_{n}(\s)+ \sum\limits_{x\in W_{n-1}}\sum\limits_{y\in
S(x)}\sigma(x)\sigma(y)h_{xy,\sigma(x)\sigma(y)}\big\}.
\end{eqnarray}

Then the free energy is defined as follows:
\begin{eqnarray}\label{free1}
F(\beta,\h)=\lim\limits_{n\to \infty }\frac{1}{\beta |V_n|}\ln
Z_n(\beta, \h).
\end{eqnarray}

In this section, we discuss the behavior of the free energy of the
model, as function of boundary conditions. As before, we consider
the boundary conditions \eqref{req1}, i.e.
\begin{equation}\label{req12}
h_{xy,++}= h_{xy,-+}=\tilde h_1 \mbox{ and } h_{xy,--}=
h_{xy,+-}=\tilde h_2. \ \ \ \forall <x,y>\in L.
\end{equation}

\begin{prop}\label{prop-free}
The free energies corresponding to the translation-invariant (TI)
boundary conditions with \eqref{req12} exist and are given by
\begin{eqnarray}\label{FE-TI1-case1}
F_{TI_{1}}(\beta, h_i)=-\frac{1}{\beta}\log
\big[2\cosh(h_i+\beta(J+J_p))\cosh(h_i+\beta(J-J_p))\big],
\end{eqnarray}
where $h_i$, ($i=1,2,3$), is the variety such that $u_i=e^{h_i}$
which is a solution of \eqref{case12}.
\end{prop}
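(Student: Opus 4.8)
The plan is to extract the free energy directly from the multiplicative recursion for the partition functions that is already available from the proof of Theorem \ref{theorem1}. Under the compatibility conditions \eqref{necessary} --- here specialized to the translation-invariant, symmetric regime \eqref{req12} in which all four entries of $\h_{xy}$ equal $h_i=\ln u_i$, with $u_i$ a root of \eqref{case12} --- that proof yields $Z_n=U_{n-1}Z_{n-1}$, where $U_{n-1}=\prod_{x\in W_{n-2}}\prod_{y\in S(x)}D(x,y)$ and $D(x,y)$ is the single-edge weight produced by summing the Boltzmann factor over the last-generation spins $\eta(z)$, $z\in S(y)$. Since the law is translation-invariant, $D(x,y)\equiv D$ is constant, and because there are $|W_{n-2}|\cdot k=k^{n-1}$ edges joining $W_{n-2}$ to $W_{n-1}$ we get $U_{n-1}=D^{k^{n-1}}$. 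Iterating gives $\ln Z_n=\ln Z_1+(\ln D)\sum_{m=2}^{n}k^{m-1}$, a pure geometric sum, so the limit in \eqref{free1} manifestly exists; this disposes of the existence assertion.

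The core of the argument is the explicit evaluation of $D$. Its defining identity (from the sufficiency part of Theorem \ref{theorem1}) reads
\[
D=e^{-\s(x)\s(y)h_i}\prod_{z\in S(y)}\sum_{\eta(z)=\pm1}\exp\big[\s(y)\eta(z)h_i+\beta\eta(z)\big(J\s(y)+J_p\s(x)\big)\big],
\]
and I would evaluate the right-hand side for two convenient choices of $(\s(x),\s(y))$. For $\s(x)=\s(y)=+1$ the inner bracket collapses to $\eta(z)\big(h_i+\beta(J+J_p)\big)$, giving $D=e^{-h_i}\big(2\cosh(h_i+\beta(J+J_p))\big)^{k}$; for $\s(x)=-1,\ \s(y)=+1$ it collapses to $\eta(z)\big(h_i+\beta(J-J_p)\big)$, giving $D=e^{h_i}\big(2\cosh(h_i+\beta(J-J_p))\big)^{k}$. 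This is exactly the mechanism producing the two shifted arguments $J\pm J_p$. Lemma \ref{lemma1} (equivalently, the fact that $u_i$ solves \eqref{case12}) guarantees that the two expressions agree, and taking their geometric mean cancels the asymmetric factors $e^{\mp h_i}$ and casts $D$ in the symmetric form $\big(2\cosh(h_i+\beta(J+J_p))\,2\cosh(h_i+\beta(J-J_p))\big)^{k/2}$.

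To finish, I substitute $U_{m-1}=D^{k^{m-1}}$, use $\sum_{m=2}^{n}k^{m-1}=\tfrac{k^n-k}{k-1}$ and $|V_n|=\tfrac{k^{n+1}-1}{k-1}$, and let $n\to\infty$ in \eqref{free1}. The fixed boundary term $\ln Z_1$ is washed out by the exponential growth of $|V_n|$, while the quotient of the two geometric factors tends to $1/k$; hence the free energy is a fixed multiple of $\ln D$ (in the normalization of \eqref{free1}). Specializing to $k=2$ and inserting the closed form of $D$ expresses the free energy through $\cosh(h_i+\beta(J+J_p))\cosh(h_i+\beta(J-J_p))$, i.e. the formula \eqref{FE-TI1-case1}.

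The main obstacle is the middle step: one must check that $D$ is genuinely well defined, that is, independent of the representative pair $(\s(x),\s(y))$ used to compute it. This is precisely the compatibility content of Theorem \ref{theorem1}, and verifying it amounts to showing that, in the symmetric translation-invariant case, the three equations \eqref{necessary} degenerate to the single fixed-point relation $u_i=g(u_i)$ of \eqref{case12}. Once this consistency is secured, the symmetric cosh-product form of $D$, and therefore the stated free energy, follow by routine bookkeeping of the exponential volume growth.
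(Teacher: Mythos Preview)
Your approach is essentially identical to the paper's: both use the recursion $Z_n=U_{n-1}Z_{n-1}$ from Theorem~\ref{theorem1}, evaluate the edge factor $D(x,y)$ for the two spin configurations $(\sigma(x),\sigma(y))=(+,+)$ and $(-,+)$, and take the geometric mean to produce the symmetric $\cosh$-product form before passing to the thermodynamic limit. The paper keeps $\tilde h_1,\tilde h_2$ separate until the final step and is slightly less explicit about the geometric-sum bookkeeping, but the strategy and the key computations coincide.
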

\begin{proof} From \eqref{req1} one finds
\begin{eqnarray}\label{10}\nonumber
D(x,y)e^{h_{xy,++}}&=&\prod\limits_{z\in S(y)}\big[e^{h_{yz,++}+\beta(J+J_p)}+e^{-h_{yz,+-}-\beta(J+J_p)}\big]\\
&=&\prod\limits_{z\in S(y)}2e^{\frac{h_{yz,++}-h_{yz,+-}}{2}}\cosh\big[\frac{h_{yz,++}-h_{yz,+-}}{2}+\beta(J+J_p)\big].
\end{eqnarray}
\begin{eqnarray}\label{11}\nonumber
D(x,y)e^{-h_{xy,-+}}&=&\prod\limits_{z\in S(y)}\big[e^{h_{yz,++}+\beta(J-J_p)}+e^{-h_{yz,+-}-\beta(J-J_p)}\big]\\
&=&\prod\limits_{z\in S(y)}2e^{\frac{h_{yz,++}-h_{yz,+-}}{2}}\cosh\big[\frac{h_{yz,++}-h_{yz,+-}}{2}+\beta(J-J_p)\big].
\end{eqnarray}
Multiply the equations \eqref{10} and \eqref{11}, we then obtain
\begin{eqnarray}\label{12}
D(x,y)
&=&4\prod\limits_{z\in S(y)}b(y,z),
\end{eqnarray}
where
$$b(y,z)=e^{\frac{h_{yz,++}-h_{yz,+-}}{2}}\big(\cosh\big[\frac{h_{yz,++}-h_{yz,+-}}{2}+\beta(J+J_p)\big]
\cosh\big[\frac{h_{yz,++}-h_{yz,+-}}{2}+\beta(J-J_p)\big]\big)^{\frac{1}{2}}.
$$
Hence, one finds
\begin{eqnarray}\label{13}\nonumber
U_{n-1}&=&\prod\limits_{x\in W_{n-2}}\prod\limits_{y\in
S(x)}D(x,y)\\\nonumber &=&4^{|W_{n-1}|}\prod\limits_{y\in
W_{n-1}}\prod\limits_{z\in
S(y)}b(x,y)\\[2mm]\nonumber
&=&4^{|W_{n-1}|}e^{\sum\limits_{y\in W_{n-1}}\sum\limits_{z\in
S(y)}\ln b(x,y)}.
\end{eqnarray}

Denoting $ \mathfrak{a}(x,y)=\ln {b}(x,y)$, from the last equality
we get
\begin{eqnarray}\label{14}\nonumber
Z_{n}&=&U_{n-1}Z_{n-1}\\
&=&4^{|V_{n-1}|}e^{\sum\limits_{y\in W_{n-1}}\sum\limits_{z\in S(y)}\mathfrak{a}(y,z)}e^{\sum\limits_{y_1\in W_{n-2}}
\sum\limits_{z_1\in S(y)}\mathfrak{a}(y_1,z_1)}\ldots e^{\sum\limits_{\widetilde{y}\in W_{0}}
\sum\limits_{\widetilde{z}\in S(\widetilde{y})}\mathfrak{a}(\widetilde{y},\widetilde{z})}\\\nonumber
&=&4^{|V_{n-1}|}e^{\sum\limits_{<x,y>\in V_{n}}\mathfrak{a}(x,y)}.
\end{eqnarray}
%
Therefore, from \eqref{req12} we have
\begin{eqnarray}\label{FE-TI1}\nonumber
F_{TI_{1}}(\beta, \h)&=&-\lim\limits_{n\to \infty }\frac{|V_{n-1}|}{\beta |V_n|}\ln D(x,y)\\
&=&-\frac{1}{\beta}\ln\big[2e^{\tilde h_1-\tilde
h_2}\cosh(\frac{\tilde h_1+\tilde
h_2}{2}+\beta(J+J_p))\cosh(\frac{\tilde h_1+\tilde
h_2}{2}+\beta(J-J_p))\big].
\end{eqnarray}
Due to \eqref{newcase1} we may assume that $e^{\tilde h_1}=e^{\tilde
h_2}$. So, due to Proposition \ref{Proposition-k=2}, under certain
conditions, there exist three solutions $u_i$ ($i=1,2,3$) of
\eqref{case12}. Denoting $h_i=\ln u_i$ from \eqref{FE-TI1}, one
finds
\begin{eqnarray}\label{FE-TI1-case1}
F_{TI_{1}}(\beta, h_i)
&=&-\frac{1}{\beta}\ln\big[2\cosh(h_i+\beta(J+J_p))\cosh(h_i+\beta(J-J_p))\big].
\end{eqnarray}
This completes the proof.
\end{proof}
\begin{figure}[!htbp]\label{free}
 \centering
\includegraphics[width=80mm]{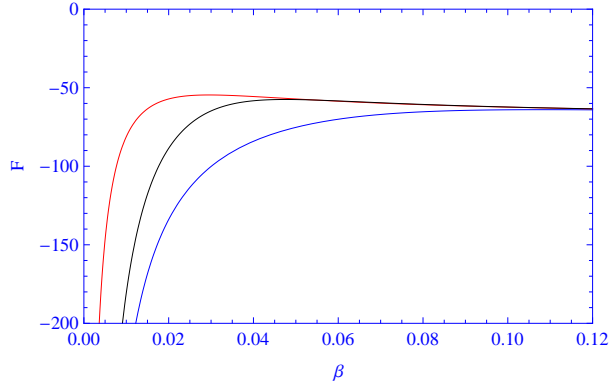}
\caption{The free energies $F_{TI_{1}}(\beta, h)$ (blue color
line, for $u_1 =0.260261$), (red color, for $u_2 = 1.18483$) and
(black color line, for $u_1 =0.260261$). Here $J_p = 4.5, J
=-1.85$.}\label{free}
\end{figure}

In order to draw the free energy $F_{TI_{1}}(\beta, h)$ as a
function of $\beta$, we consider the fixed points of the function
\eqref{case11}. Some particular plots are shown in Fig. \ref{free},
where $u_1 =0.260261$, $u_2 = 1.18483$ and $u_1 =0.260261$ are the
fixed points of the function $g$ corresponding to the parameters
$J_p = 4.5, J =-1.85$ and $T = 2.6$.

Let us compute the entropy
\begin{eqnarray}\label{FE-TI1-case1}
S(\beta,h_i)&=&-\frac{d F(\beta,h_i)}{dT}=\frac{d
F(\beta,h_i)}{d\beta}\frac{1}{\beta^2}\\\nonumber
&=&-\frac{-\ln\left[2 \cosh(h_i+\beta (J-J_p)) \cosh(h_i+\beta
(J+J_p))\right]}{\beta ^4}\\\nonumber
&&-\frac{\beta(J-J_p)\tanh(h_i+\beta (J-J_p))+\beta (J+J_p) \tanh
(h_i+\beta (J+J_p))}{\beta ^4}.
\end{eqnarray}

\section{Conclusions}\label{Conclusions}

 In the present paper, we have considered the
Ising-Vannimenus model on an arbitrary order Cayley tree with
competing nearest-neighbor, prolonged next-nearest neighbor
interactions. Recently, the mentioned model was investigated only
numerically, without rigorous (mathematical) proofs \cite{NHSS}.
We have proposed a measure-theoretical approach in order to study
the translation invariant Gibbs measures associated with the
model. Under certain conditions the existence of Gibbs measures of
the Ising-Vannimenus model is obtained. Then we have established
the existence of the phase transition. Moreover, an explicit
formulae of the free energies corresponding to the translation
invariant Gibbs measures is found. Also, we have calculated the
entropies corresponding to the mentioned free energies.

We point out that for the Ising-Vannimenus model, the free
energies and entropies associated with various known boundary
conditions such as ART \cite{ART}, Bleher-Ganikhodjaev \cite{BG},
Zachary \cite{Zachary},  have not been investigated, yet. Explicit
formulaes of the free energies and entropies for the mentioned
boundary conditions will be calculated in the future publications.

\section*{Acknowledgments} The authors are grateful to an anonymous
referee whose useful comments and suggestions improved the
presentation of the paper.


\begin{thebibliography}{99}


\bibitem{A1}  Ak\i n H.,
Using new approaches to obtain Gibbs measures of Vannimenus model
on a Cayley tree, \textit{Chinese Journal of Physics}, {\bf 54}
635-649 (2016).

\bibitem{Akin2017}  Ak\i n H.,
Phase transition and Gibbs Measures of Vannimenus model on
semi-infinite Cayley tree of order three, \emph{Int. J. Mod. Phys.
B}, \textbf{31}, 1750093 (2017) [17 pages] 



\bibitem{ART} Ak\i n H., Rozikov U.A. and Temir S.,
A new set of limiting Gibbs measures for the Ising model on a Cayley tree.
\textit{J. Stat. Phys.} \textbf{142}, 314-321 (2011).

\bibitem{BG}  Bleher P.M. and Ganikhodjaev N.N.,
On pure phases of the Ising model on the Bethe lattice, Theor.
Probab. Appl. 35, 216-227 (1990).
%


\bibitem{PB} Bak P.,
Chaotic Behavior and Incommensurate Phases in the Anisotropic Ising Model with Competing Interactions,
{\it Phys.Rev. Lett.} {\bf 46}, 791-794 (1981)

\bibitem{bak} Bak P.,
Commensurate phases, incommensurate phases and the devil's
staircase, {\it Rep. Prog. Phys.} {\bf 45}, 587-629 (1982)


\bibitem{Bax} Baxter R.J., {\it Exactly Solved Models in Statistical  Mechanics}, Academic Press, London/ New York, (1982)

%

%


\bibitem{FS} Fisher M.E. and Selke W., Infinitely Many Commensurate Phases in a Simple Ising Model,
{\it Phys.Rev.Lett.} {\bf 44}, 1502-1505   (1980)




\bibitem{NHSS1} Ganikhodjaev N., Ak\i n H., Uguz S. and  Temir S.,
Phase diagram and extreme Gibbs measures of the Ising model on a
Cayley tree in the presence of competing binary and ternary
interactions, \textit{Phase Transitions} \textbf{84} (2011),
1045--1063.

\bibitem{NHSS} Ganikhodjaev N., Ak\i n H, Uguz S. and  Temir S.,
On extreme Gibbs measures of the Vannimenus model, \textit{J.
Stat. Mech.} (2011) P03025.

\bibitem{GRS} Gandolfo D., Ruiz J. and Shlosman S.,
A manifold of pure Gibbs states of the Ising model on a Cayley
tree, \textit{J. Stat. Phys.} \textbf{148} 999-1005 (2012).


\bibitem{GRRR} Gandolfo D.,  Rakhmatullaev M.M., Rozikov U.A. and  Ruiz J.,
On free energies of the Ising model on the Cayley tree, \textit{J.
Stat. Phys.} \textbf{150} (6), 1201-1217 (2013).


%

\bibitem{G} Georgii H.-O., \textit{Gibbs Measures and Phase Transitions} (de Gruyter Stud. Math., Vol.9), Walter de Gruyter, Berlin, New York
(1988)

%

\bibitem{IT} Inawashiro S. and Thompson C.J.
Competing Ising Interactions and Chaotic Glass-Like Behaviour on a Cayley Tree,
{\it Physics Letters }, {\bf 97A}, 245-248 (1983).


\bibitem{JB} Jensen M.H. and Bak P.,
Mean-field theory of the three-dimensional anisotropic Ising model as a four-dimensional mapping,
{\it Phys. Rev. B} {\bf 27}, 6853-6868 (1983)




\bibitem{MTA}  Mariz M., Tsallis C. and Albuquerque A.L. Phase Diagram of the Ising Model on a Cayley tree in the Presence of
competing Interactions and Magnetic Field, {\it J. Stat. Phys.} {\bf
40}, 577-592 (1985).

%

\bibitem{OMM} Ostilli M., Mukhamedov F. and Mendes J.F.F. Phase diagram of an Ising model with competitive interactions on a Husimi tree and its disordered
counterpart, \textit{Physica A}, {\bf 387} (2008) 2777--2792.


\bibitem{P} Preston Ch. J., \textit{Gibbs States on Countable Sets}, Cambridge Univ.Press, Cambridge (1974).

\bibitem{Roz} Rozikov U.A., \textit{Gibbs Measures on Cayley Trees}, World Scientific Publishing Company (2013).

\bibitem{RAU} Rozikov U.A.,  Ak\i n H. and Uguz S., Exact Solution of a generalized ANNNI model on a Cayley tree,
\emph{Math. Phys. Anal. Geom.} {\bf 17}, 103-114 (2014).





%


\bibitem{V}  Vannimenus J., Modulated phase of an Ising system with competing interactions on a Cayley tree,
\textit{Zeitschrift fur Physik B Condensed Matter}, {\bf 43}(2),
141Ц148 (1981).

%
\bibitem{Zachary} Zachary S.,
Countable state space Markov random Felds and Markov chains on
trees. \textit{Ann. Prob.} \textbf{11}, 894-903 (1983).
\end{thebibliography}
\end{document}